\DeclareFixedFont{\fiverm}{OT1}{cmr}{m}{n}{5pt}
\theoremstyle{definition}
\newtheorem{Definition}{Definition}
\newtheorem{remark}{Remark}
\newtheorem{Proposition}{Proposition}
\begin{document}

\title{A model of anaerobic digestion for biogas production using Abel equations}
\author{Primitivo B. Acosta-Hum\'anez\footnote{Facultad de Ciencias B\'asicas y Biom\'edicas, Universidad Sim\'on Bol\'{\i}var, Barranquilla, Colombia} \\ Maximiliano Machado Higuera\footnote{Faculty of Natural Sciences and Mathematics,  University of Ibagu\'e, Ibagu\'e, Colombia.} \\Alexander V. Sinitsyn\footnote{Department of Mathematics, National University of Colombia, Bogot\'a D.C., Colombia.}}

\date{}
\maketitle

{\bf\large Abstract:}
Some time ago has been studied mathematical models for biogas production due to its importance in the use of control and optimization of re\-new\-able resources and clean energy. In this paper we combine two algebraic methods to obtain solutions of Abel equation of first kind that arise from a mathematical model to biogas production formulated in France on 2001. The aim of this paper is obtain Liouvillian solutions of Abel's equations through Hamiltonian Algebrization. As an illustration, we present graphics of solutions for Abel equations and solutions for algebrized Abel equations.
\bigskip

{\bf{\large  Keywords:}} 
Abel equation; biogas; differential equations; Hamiltonian Algebrization, Liouvillian solutions\\
$2010$ Mathematics Subject Classification: $12H05$ $34A05$, $34A34$, $90B30$.

\section{Introduction}\label{Section1}

In this paper we present some complementary results presented in \cite{mmavs2015,mmh2015t}, where was studied the following system:
\begin{equation}
\begin{array}{lll}
\frac{dX_{1}}{dt} &=& \left( \mu_{1}(S_1) - \alpha D \right) \ X_{1}\\
\frac{dS_{1}}{dt} &=& D \left( S_{1in} - S_1 \right) - k_1 \ \mu_{1}(S_1) \ X_{1}
\end{array}\label{sys1}
\end{equation}
which is related with anaerobic digestion model to biogas.
\bigskip

The following concepts and definitions can be found in  \cite{acsbook,aclm,acmowe}, although they are very well known in the literature concerning to \emph{differential Galois theory}.

A differential field $(K,\partial)$ is a field $K$ endowed with a derivation $\partial$. The field of constants of $K$, denoted by $C_K$, corresponds to all elements of $K$ vanishing under the derivation $\partial$ and it is assumed as algebraically closed and of characteristic zero.

Consider a differential equation \begin{equation}\label{liouv}f(x,y,y',y'',\ldots y^{(n)})=0,\end{equation} where $f$ is an analytic function in a complex variable $x$ and its coefficients belong to the differential field $K$. We say that a function $\theta$ satisfying equation \eqref{liouv} is either
\begin{enumerate}
	\item \emph{algebraic} over $K$ whether $\theta$ satisfies a polynomial equation whose coefficients belong to $K$, or
	\item  \emph{primitive} over $K$ whether $\theta'\in K$, i.e., $\theta=\int \eta$ for some $\eta\in K$, or
	\item \emph{exponential} over $K$ whether ${\theta'\over \theta}\in K$, i.e., $\theta=\exp(\eta)$ for some $\eta\in K$.
\end{enumerate}
A solution of equation \eqref{liouv} is called \emph{Liouvillian} whether $\theta$ is there exists a finite chain of differential fields $K=K_0\subset K_1\subset \ldots K_n$ such that $K_i=K_{i-1}(\theta)$, $1\leq i\leq n$, where $\theta$ is algebraic, primitive or exponential over $K$.

In other words, Liouvillian functions include Elementary functions such as algebraic, logarithmic, trigonometric, hyperbolic, inverses, etc. In the usual terminology of differential equations, by analytical or exact solutions we means solutions that can be obtained explicitly, including special functions. For instance, there are analytic or exact solutions that are not Liouvillian, for example Airy functions. Therefore \emph{error function} is of Liouvillian type. Thus, we consider in this paper only Liouvillian functions as solutions of non-linear equations. In \cite{acsu} were used Liouvillian functions to obtain explicit \emph{propagators}, while in this paper we obtain explicit solutions of Abel equations that are Liouvillian functions.

Our results are concerning to Abel equation of first kind, which is a nonlinear differential equation (NDE) that is cubic in unknown functions and it is written in general form as:
\begin{equation}
y^{'} = f_{0}(x) + f_{1}(x)y + f_{2}(x)y^{2} + f_{3}(x)y^{3}\label{ecuaSal}
\end{equation}
The difficult question is searching Liouvillian solutions to Abel's differential equation of first kind \eqref{ecuaSal}. The general Liouvillian solution to Abel's equation \eqref{ecuaSal} is an open problem. There are not many results on the construction of Liouvillian solutions of Abel's equation \eqref{ecuaSal} in evident form. We mention the work presented in \cite{Salinas} where is suggested a new construction method of Liouvillian solutions of \eqref{ecuaSal}. In \cite{almp} were considered Abel's equation of second kind.
\bigskip

The aim of this paper is to install coupling between the biogas subsystem \eqref{sys1}, the method to obtain Liouvillian solutions of Abel's equations presented in \cite{Salinas} and Hamiltonian Algebrization method presented in \cite{acsbook,acs} to investigate Liouvillian solutions of Abel's equation \eqref{ecuaSal}.
\bigskip

The paper contains the  following. In Section 2, following \cite{Salinas} we obtain Abel equation from NDEs modeling of upper and lower solutions to  biogas production with special coefficients $f_{0}(x), f_{1}(x), f_{2}(x), f_{3}(x)$ proposed in \cite{mmh2015t}. In the first step, we consider the nonlinear system of ODEs of biogas production, which includes seven equations as in \cite{Ber01}. In the second step following \cite{mmavs2015,mmh2015t} we decompose original biogas system into subsystems using the method of upper and lower solutions to reduce the first subsystem for two equations into the Abel's equation. In Section 3, following \cite{acs,acmowe}, we use Hamiltonian Algebrization to transform Abel's equation with transcendental coefficients in a new Abel's equation with rational coefficients and then comparing numerical approximations of Abel's equation solutions with numerical approximations of Algebrization of Abel's equation solutions. Finally in the Section 4 we compute the Liouvillian solutions of the algebrized Abel's equation.
\bigskip

\section{Biogas subsystem associated with Abel equation}

Biogas - generic name for a combustible gas mixture produced in
the decomposition of organic substances due to anaerobic
microbiological process (methane fermentation). Biogas, the
end-product of anaerobic digestion (AD) mainly consisting of
$CH_{4} (60-70\%)$ and $CO_{2} (30-40\%)$, provides considerable
potential as a versatile carrier of renewable energy, not solely
because of the wide range of substrates that can be used for the
AD process \cite{Sialve}. AD has been used to convert biomass into biogas microalgae \cite{Sialve,Mariet}. AD is also used in wastewater treatment \cite{Mariet,Ber01}. Biogas is widely used as a combustible fuel en Germany, Denmark, China, United States and other developed countries. It is used for home consumption and pu\-blic transportation.
\vspace{0.1cm}

We consider the following nonlinear system of ODEs that models the process of AD \cite{Ber01}:
\vspace{0.1cm}
{\bf{\it{Biomass balance}}}
\begin{align}
\frac{dX_{1}}{dt} &= \left( \mu_{1}(S_1) - \alpha D \right) \ X_{1} \ \ \	 \text{(acidogenesis)}\label{ecua1}\\
\frac{dX_{2}}{dt} &= \left( \mu_{2}(S_2) - \alpha D \right) \ X_{2} \ \ \ \text{(methanogenesis)}\label{ecua2}
\end{align}

{\bf{\it{Substrate balance}}}
\begin{align}
\frac{dS_{1}}{dt} &= D \left( S_{1in} - S_1 \right) - k_1 \ \mu_{1}(S_1) \ X_{1}\label{ecua3}\\
\frac{dS_{2}}{dt} &= D \left( S_{2in} - S_2 \right) + k_2 \ \mu_{1}(S_1) \ X_{1} - k_3 \ \mu_{2}(S_2) \ X_{2}\label{ecua4}
\end{align}

{\bf{\it{Alkalinity balance}}}
\begin{align}
\frac{dA}{dt} & = D \left( A_{in} - A \right)\label{ecua5}
\end{align}

{\bf{\it{Carbon exchange rate}}}
\begin{align}
\frac{dC}{dt} & = D \left( C_{in} - C \right) + k_4 \ \mu_{1}(S_1) \ X_{1} + k_5 \ \mu_{2}(S_2) \ X_{2}\\
\;\; & \; - K_{L_a}[C + S_2 - A - K_{H}P_{C}]\label{ecua6}
\end{align}
The product  $K_H \ P_C = B$ determines concentration of oxygen dissolved in $C$.
\vspace{0.1cm}

{\bf{\it{Net rate of methane production}}}
\begin{align}
\frac{d F_M}{dt} & = k_6 \ \mu_{2}(S_2) \ X_2\label{ecua7}
\end{align}

with methane concentration \textbf{$F_M$}.
\vspace{0.1cm}

Nonlinear kinetic behavior, occurs due to the reaction rates, which are given by  $\mu_{1}(S_{1}) = \mu_{1max}\frac{S_{1}}{S_{1} + K_{S_{1}}}$ - Monod type kinetic and $\mu_{2}(S_{2}) = \mu_{2max} \frac{S_{2}}{\frac{S_{2}^{2}}{K_{I2}} + S_{2} + K_{S_{2}}}$ - Haldane type kinetic represent bacterial growth rates associated with two bioprocesses.
\vspace{0.1cm}

In this case the variables are:
\begin{align*}
S_{1}:=& \ {\rm Organic\; substrate\; concentration \ [g/l]}\\
X_{1}:=& \ {\rm Concentration\; of\; acetogenic\; bacteria \ [g/l]}\\
S_{2}:=& \ {\rm Volatile\; fatty\; acids\; concentration \ [mmol/l]}\\
X_{2}:=& \ {\rm Concentration\; of\; methanogenic\; bacteria \ [g/l]}\\
A:=& \ {\rm Concentration\; of\; alkalinity \ [mmol/l]}\\
C:=& \ {\rm Total\; inorganic\; carbon\; concentration \ [mmol/l]}\\
F_{M}:=& \ {\rm Methane\; concentration \ [mmol/l \;\; d^{-1}]}.
\end{align*}

With $\mu_{1max}>0$, $\mu_{2max}>0$, $K_{S_{1}}>0$, $K_{S_{2}}>0$, $K_{I2}>0$,
$A_{in}>0$, $B>0$, $C_{in}>0$, $K_{L_{a}}>0$, $S_{1in}>0$,
$S_{2in}>0$, and $k_{i}>0$, $i=1,2,3,4,5,6$., are positive
constants. The parameter $\alpha$ ($0\leq \alpha \leq 1$)
therefore reflects this process heterogeneity: $\alpha =0$
corresponds to an ideal fixed-bed reactor, whereas $\alpha =1$
corresponds to and ideal continuously stirred tank reactor(CSTR) \cite{Jerome,Ber01}.
\vspace{0.1cm}

\begin{Definition}
A trivial solution of the system \eqref{ecua1}-\eqref{ecua7} has the form
\begin{equation*}
\begin{aligned}
&E_{1}(0,S_{1in}), \;\; E_{2}( 0,S_{1in},0,S_{2in}), \;\; E_{3}(0, S_{2in})\\
&\;\;\;\;\;\;\; E_{4}( 0,S_{1in},0,S_{2in},A_{in}), \;\; E_{5}(A_{in}) \label{soltriv}
\end{aligned}
\end{equation*} \label{defin1}
where $X_{1}=0, \;X_{2}=0, \;S_{1}=S_{1in}, \;S_{2}=S_{2in}, \;A=A_{in}, \;C=C_{in}, \;F_{M} = 0$.
\end{Definition}

We are interested in the solution of the nonlinear subsystem with initial conditions, which models dynamics of biomass and substrate in acidogenesis.

\begin{equation}
\left\{
\small{
\begin{aligned}
&\frac{dX_{1}}{dt}  = \left( \mu_{1_{max}} \frac{S_1}{S_1 + K_{S_1}} - \alpha D \right) \ X_{1}  \triangleq F\left( t,X_{1}(t),S_{1}(t) \right),\\
&\frac{dS_{1}}{dt}  = D \left( S_{1in} - S_1 \right) - k_1 \ \mu_{1_{max}} \frac{S_1}{S_1 + K_{S_1}} \ X_{1} \triangleq  G(t,X_{1}(t),S_{1}(t)),\\
& X_{1}(0) \; \; = \; \; c_{1},\\
& S_{1}(0) \; \; = \; \; c_{2}.
\end{aligned} }
\right. \label{sis3}
\end{equation}
where $ t\in [0,T]=I,$ with $T>0 $ and $F, G$ are functions of $C^{0}(I)=C(I)$ class.
\vspace{0.1cm}

\begin{Definition}[Lower-upper solution]
A pair 
{\small{ $[(X_{10}, S_{10})$, $(X_{1}^{0}, S_{1}^{0})]$ }}
is called
\begin{description}
\item[] Lower-upper solution of the problem \eqref{sis3}, if the following conditions are sa\-tis\-fied
    \begin{align*}
      &(X_{10}, S_{10}) \in C^{1}(I), \qquad ( X_{1}^{0}, S_{1}^{0} ) \in C^{1}(I), \qquad t \in I\\
      &\dot{X_{1}}_{0} - F(t, X_{10}, S_1) \leq 0  \qquad {\rm (lower)}\\
      &\dot{X_{1}}^{0} - F(t, X_{1}^{0}, S_1) \geq 0 \ {\rm \; in}\;\; \; I, \;\;\; \forall S_1 \in [S_{1}^{0}, S_{10}], {\rm \;\; (upper)}\\
      &\dot{S}_{10} - G(t, X_{1}, S_{10}) \leq 0 \qquad  {\rm \; (reverse \; order)}\\
      &\dot{S}^{0}_{1}- G(t, X_{1}, S^{0}_{1}) \geq 0\ {\rm \; in}\; I, \;\forall X_{1}\in [X_{10}, X_{1}^{0}]  {\rm  (reverse \; order)}\\
      &{\rm and\; with\; initial\; conditions}\\
      &X_{10}(0) \leq c_2 \leq X_{1}^{0}(0), \; S_{1}^{0}(0) \leq c_1 \leq S_{10}(0);
    \end{align*}
with $X_{10}\leq X^{0}_{1}, \;\;\; S^{0}_{1}\leq S_{10}   \;\;{\rm \; in}\;\;I.$
\end{description}\label{defin2}
\end{Definition}

We reduce the system \eqref{sis3} via transformation  $S_{1}=S_{1in}$ to the one nonlinear equation
$$\frac{dX_{1}}{dt}  = \left( \mu_{1_{max}} \frac{S_1}{S_1 + K_{S_1}} - \alpha D \right) \ X_{1}$$
with solution
\begin{align*}
X_{1} &= X_{1}(0)\exp{\left(\mu_{1max}\frac{S_{1in}}{S_{1in} + K_{S_1}} - \alpha D\right)t}\notag\\
X_{1} &= X_{1}(0)\exp{\left(  m\;t \right)}\;\;\; {\textrm{with}}\;\;\; m=\frac{\mu_{1max} S_{1in}}{S_{1in} + K_{S_1}} - \alpha D.\label{ecua19}
\end{align*}
We take $\Gamma>0$ therefore, the solution
\begin{equation*}
X_{1}^{0} = X_{1} + \Gamma \;\;\; {\rm and}\;\;\; X_{10} = X_{1} - \Gamma \label{lowX1triv}
\end{equation*}
are an upper solution and lower solutions of \eqref{sis3} respectively.

From Definition \ref{defin2} to ${S_1}$ in \eqref{sis3}
\begin{equation}
\begin{aligned}
{\dot{S_{1}}}_{0} & - D \left( S_{1in} - {S_1}_{0} \right) + k_1 \; \mu_{1max} \frac{{S_1}_{0}}{{S_1}_{0} + K_{S_1}} \; X_{1}^{0} \leq 0 \ \ \ \textrm{(lower)}\\
{\dot{S_{1}}}^{0} & - D \left( S_{1in} - {S_1}^{0} \right) + k_1 \; \mu_{1max} \frac{{S_1}^{0}}{{S_1}^{0} + K_{S_1}} \; X_{10} \geq 0 \; \; \; \textrm{(upper)}
\end{aligned}
\end{equation}

We analyze the case where $X_{10} \neq 0$, on \eqref{sis3} to search upper-solution for $S_1$, then
\begin{equation*}
\frac{dS_{1}^{0}}{dt} = D (S_{1in} - {S_1^0}) - \left[ \frac{k_1 \ \mu_{1max} \ S_1^0}{S_1^0 + K_{S_1}} \right] X_{10}\notag
\end{equation*}
taking $X_{1}^{0}$  from \eqref{lowX1triv}:

\begin{equation}
\frac{dS_{1}^{0}}{dt} = D (S_{1in} - {S_1^{0}}) - \left[ \frac{k_1 \ \mu_{1max} \ S_1^{0}}{S_1^{0} + K_{S_1}} \right]
\left[ X_{1}(0)\ \exp{\left( m\;t \right) - \Gamma} \right]\label{upS1}
\end{equation}

we have $V=(S_1^{0} + K_{S_1})^{-1}$ is obtained Abel equation the first kind
{\small{
\begin{equation}
\begin{aligned}
\frac{dV}{dt} &= DV \ + \ \left[- D( S_{in} + K_{S_1} ) + k_1 \mu_{1max}X_{1}(0) \exp{\left( mt \right) } - k_{1}\mu_{1max}\Gamma \right] V^2\\
& + \left[ - k_1 \ \mu_{1max} \ X_{1}(0) \ K_{S_1}  \exp{\left( m\;t \right) + k_{1}\mu_{1max}\Gamma } \right] V^3 .
\end{aligned}\label{ecuaAbel}
\end{equation} } }

\section{Algebrization of Abel equations}\label{Section2}

In this section we transform the equation \eqref{ecuaAbel}, Abel equation, into a new Abel equation with rational coefficients. To do this, we use the \emph{Hamiltonian Algebrization}. For suitability, from now on we write $\partial_t$ instead of $\frac{d}{dt}$.

\begin{Definition}[Hamiltonian Change of Variable]
The change of variable $w=w(t)$ is called Hamiltonian whether there exists $\alpha=\alpha(w)$ such that $(\partial_t w)^2=\alpha(w)$. If $\alpha(w)\in \mathbb{C}(w)$, we say that $w=w(t)$ is a rational Hamiltonian change of variable.
\end{Definition}
We call \emph{Algebrization of a differential equation} whether we can obtain an algebraic form of such differential equation. This means that we transformed a differential equation with non-rational coefficients into a differential equation with rational coefficient. The Algebrization is called Hamiltonian whether was done through a Hamiltonian change of variable. Following \cite{acmowe} we obtain in a natural way the following result.\\
\begin{Proposition}
The algebraic form of Abel equation \eqref{ecuaSal}, through the Hamiltonian change of variable $w=\exp(mt)$, is given by
{\small{
\begin{equation}
\begin{aligned}
\partial_w\widehat{V} &= \frac{D}{mw}\widehat{V} \ + \left[{- D( S_{in} + K_{S_1} ) + k_1 \mu_{1max}X_{1}(0) w - k_{1}\mu_{1max}\Gamma\over mw} \right]\widehat{V}^2\\
& + \left[ {- k_1 \ \mu_{1max} \ X_{1}(0) \ K_{S_1} w + k_{1}\mu_{1max}\Gamma  \over m w}\right]\widehat{V}^3 ,
\end{aligned}\label{ecuaAbelalg}
\end{equation} } }
\end{Proposition}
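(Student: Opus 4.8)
The plan is to check directly that $w=\exp(mt)$ is a rational Hamiltonian change of variable and then to carry \eqref{ecuaAbel} along it by the chain rule; this is just the general algebrization procedure of \cite{acmowe} specialized to the present first-order setting. First I would record that $\partial_t w=m\exp(mt)=mw$, hence $(\partial_t w)^2=m^2w^2\in\mathbb{C}(w)$, so that, as soon as $m\neq 0$, the map $w=\exp(mt)$ is a rational Hamiltonian change of variable in the sense of the Definition above, with $\alpha(w)=m^2w^2$. The condition $m\neq 0$ is harmless under the standing positivity hypotheses --- it amounts to $\mu_{1max}S_{1in}/(S_{1in}+K_{S_1})\neq\alpha D$ --- and it ensures that $t\mapsto w$ is invertible, so that $\widehat{V}(w):=V(t)$ is well defined.

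Next I would transport the equation. By the chain rule, $\partial_t V=(\partial_w\widehat{V})\,(\partial_t w)=mw\,\partial_w\widehat{V}$, while every occurrence of $\exp(mt)$ in the coefficients of \eqref{ecuaAbel} becomes $w$ and the constants stay put. Substituting both facts into \eqref{ecuaAbel} gives
\begin{equation*}
\begin{aligned}
mw\,\partial_w\widehat{V} &= D\,\widehat{V} + \left[-D(S_{in}+K_{S_1})+k_1\mu_{1max}X_{1}(0)w-k_{1}\mu_{1max}\Gamma\right]\widehat{V}^2\\
&\quad + \left[-k_1\mu_{1max}X_{1}(0)K_{S_1}w+k_{1}\mu_{1max}\Gamma\right]\widehat{V}^3 .
\end{aligned}
\end{equation*}

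Finally, dividing through by $mw$ (permissible since $w=\exp(mt)\neq 0$) isolates $\partial_w\widehat{V}$ and yields exactly \eqref{ecuaAbelalg}. Since the new coefficients are rational functions of $w$, equation \eqref{ecuaAbelalg} is an Abel equation of the first kind with rational coefficients (and with vanishing independent term, just as in \eqref{ecuaAbel}), which is the claimed Hamiltonian algebrization. I do not anticipate a genuine obstacle: the computation is a single change of variables, and the only point deserving attention is the verification of the Hamiltonian condition $(\partial_t w)^2\in\mathbb{C}(w)$ together with the non-degeneracy $m\neq 0$, both immediate here.
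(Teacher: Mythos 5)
Your proposal is correct and follows essentially the same route as the paper: verify that $(\partial_t w)^2=m^2w^2\in\mathbb{C}(w)$ so the change of variable is rational Hamiltonian, replace $\partial_t$ by $mw\,\partial_w$ via the chain rule, substitute $\exp(mt)\mapsto w$ in the coefficients of \eqref{ecuaAbel}, and divide by $mw$. Your added remarks on the non-degeneracy $m\neq 0$ and $w\neq 0$ are sensible refinements the paper leaves implicit, but they do not change the argument.
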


\begin{proof}
Due to the change of variable $w=\exp(mt)$ is Hamiltonian, we can obtain $\alpha=\alpha(w)=(\partial_t w)^2$. In this case $\alpha=m^2 w^2\in\mathbb{C}(w)$. Furthermore,  $\partial_ t$ is transformed in $\sqrt{\alpha}\partial_w=m w\partial_w$, which allow us to arrive to our result. That is,
from the change of variable
$$
w = e^{mt},   
$$
with
$$
\frac{1}{mw} \frac{dV}{dt} = \partial _{w} \hat{V}
$$

we obtain
\begin{equation*}
\begin{aligned}
\partial _{w}\widehat{V} &= \frac{D}{m w}\widehat{V} \ + \left[{- D( S_{in} + K_{S_1} ) + k_1 \mu_{1max}X_{1}(0) w - k_{1}\mu_{1max}\Gamma\over m w} \right]\widehat{V}^2\\
& + \left[ {- k_1 \ \mu_{1max} \ X_{1}(0) \ K_{S_1} w + k_{1}\mu_{1max}\Gamma  \over m w}\right]\widehat{V}^3.
\end{aligned}
\end{equation*}

\end{proof}

\begin{remark}
Algebrized Abel equation \eqref{ecuaAbelalg} can lead us to the obtaining of solutions of Abel equation \eqref{ecuaSal}. Moreover, the hamiltonian change of variable $w=e^{mt}$ and its derivative is a solution curve of the hamiltonian $H(w,p)={p^2\over 2}-{m^2w^2\over 2}+const.$
\end{remark}

We rewrite \eqref{ecuaAbelalg} to the form
\begin{equation}
\partial_{w}\hat{V} = f_{1} \hat{V} + f_{2}\hat{V}^{2} + f_{3}\hat{V}^{3}\notag\label{ecuaAbelalg2}
\end{equation}
where
$$
f_{1}= \frac{D}{mw}, \;\; f_{2}=\frac{a_{5}+a_{2}w}{mw}, \;\; f_{3}=\frac{a_{4}w-a_{3}}{mw}
$$
\begin{equation}
a_{1} = -D(S_{1in}+K_{S_1}), \;\; a_{2} = k_{1}\mu_{1max}X_{1}(0), \;\; a_{3}=- k_{1}\mu_{1max} \Gamma, \;\; a_{4}=-K_{S_1}a_{2}, \;\; a_{5}=a_{1}+a_{3}.\label{const}
\end{equation}

\subsection*{Numerical approximations}

First we reduce Abel equation with change of variable
\begin{equation}
V=\frac{1}{S_{1}^{0}+K_{S_{1}}}\label{varialV}
\end{equation}
to plotting $S_{1}(t)$.
\smallskip

\begin{figure}[!hbt]			
\centering
\includegraphics[scale=0.33]{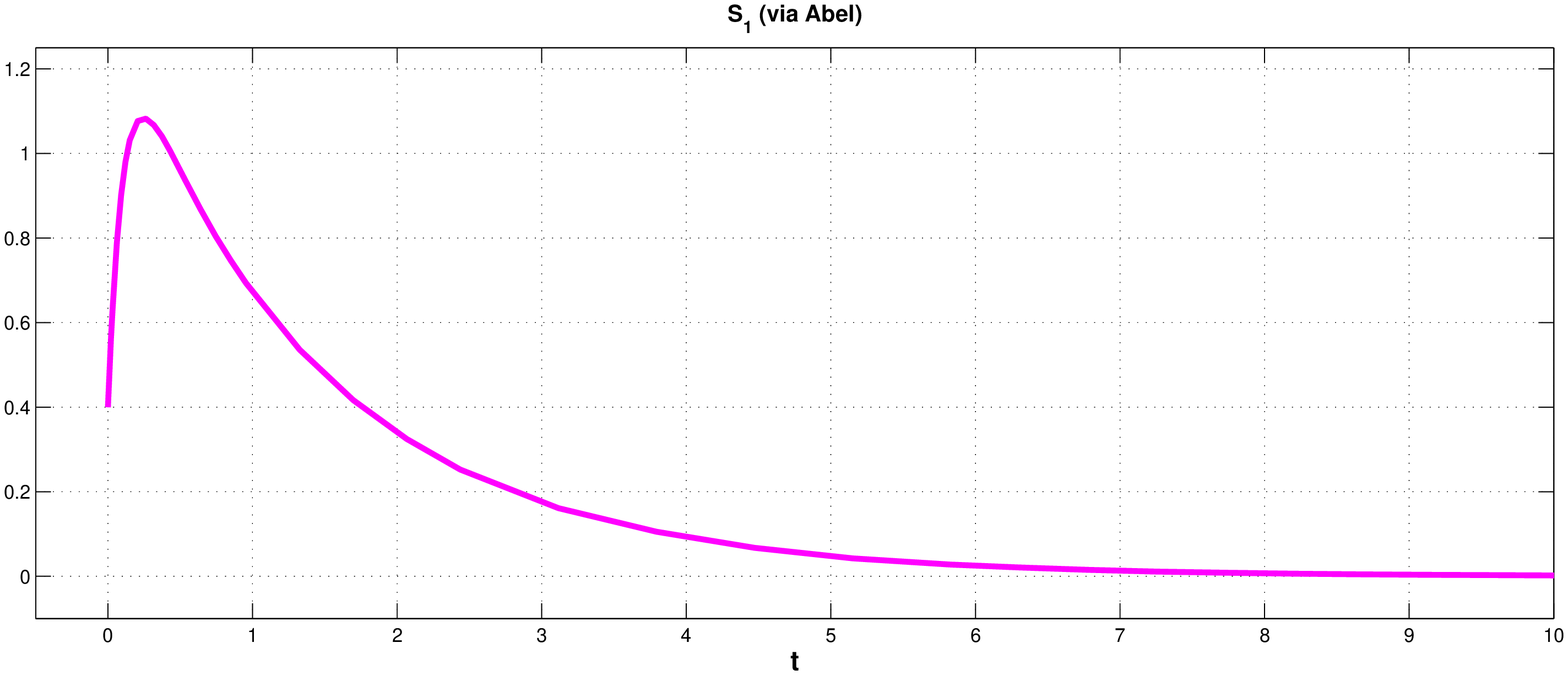}
\caption{$S_1^0$. }
\label{figura1}
\end{figure}
Figure \ref{figura1} shows  the graph of the numerical solution for the varibale $S_{1}$ associated with the Abel's equation.
\smallskip

\begin{table}[!hbt]
\begin{center}
\scalebox{0.85}{
\begin{tabular}{||l|c|l|c||} \hline
\textbf{Parameter} & \textbf{Value} & \textbf{Units} & \textbf{SD}\\ \hline
$\alpha$ & 0.5 & \;\; & 0.4 \\ \hline
$D$ & 0.395 & [d$^{-1}$] & 0.135 \\  \hline
$S_{1in}$ & 10 & [g/l] & 6.4 \\ \hline
$K_{S_1}$ & 12.1 & [g/l] & 20.62 \\ \hline
$\mu_{1max}$ & 1.2 & [d$^{-1}$] & \;\; \\  \hline
$k_{1}$ & 23.2 & \;\; & \;\; \\ \hline
\end{tabular}
}
\caption{Parameter for simulation (from \cite{Ber01}.) SD = standard deviation}
\label{tabla1}
\end{center}
\end{table}
\smallskip

Then we take Algebrization of Abel equations with change of variable
$$
\hat{V}(w) =  \frac{1}{S_{1}^{0} + K_{S_{1}}} 
$$
with
$$
w = e^{mt},   \;\;\;\;\;\;\;\; w(0) = 1
$$
and
$$
t = \frac{1}{m} \ln{w}
$$
to plotting $S_{1}^{0}(t)$\\
\begin{figure}[!hbt]						
\centering
\includegraphics[scale=0.33]{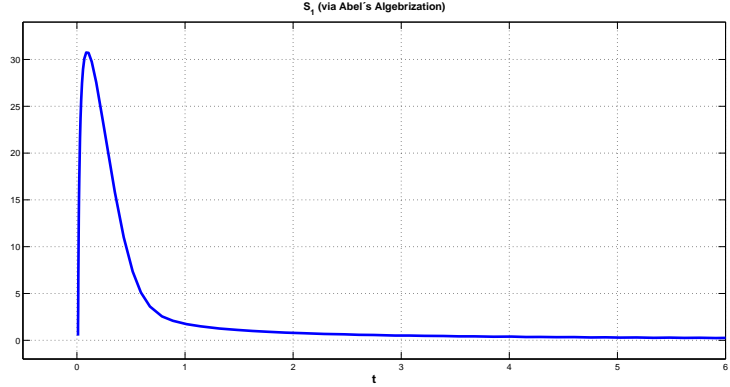}
\caption{$S_1^0$ via algebrized Abel.}
\label{figura2}
\end{figure}
\newline

Figure \ref{figura2} shows  the graph of Liouvillian solution for the variable $S_{1}$ associated with the algebrized Abel's equation.
\bigskip
The Parameter for simulation for the Figure \ref{figura2} from Table \ref{tabla1}.
\bigskip

\section{Exact solutions of Abel equation}

We apply the method presented \cite{Salinas} to the construction of Liouvillian solutions for Abel equation in the algebraic form \eqref{ecuaAbel}. The transformation $y=u(x)z(x)+v(x)$ reduces original Abel's equation to the canonic form 
$$
z'-z^{3}-\Phi =0
$$
with
\begin{equation}
\begin{aligned}
\Phi (\xi) &= \frac{\varphi e^{\int{\varphi}d\xi}}{\sqrt{C-\int{e^{2{\int{\varphi}d\xi}}}d\xi}}\notag\\
 &= \frac{1}{f_{3}u^{3}} \left[  f_{0} - \frac{f_{1}f_{2}}{3f_{3}} + \frac{2f_{2}^{3}}{27f_{3}^{2}} + \frac{1}{3}\frac{d}{dx} \frac{f_{2}}{f_{3}} \right],\notag\\
u(x)&= e^{\int{\left( f_{1}-\frac{f_{2}^{2}}{3f_{3}} \right) }dx}, \; \xi = \int{f_{3}u^{2}}dx,\notag
\end{aligned}
\end{equation}
where $\varphi$ is arbitrary function and $v=-\frac{f_{2}}{3f_{3}}$.\\

{\bf Case 1}. \; $\varphi(\xi) =0$. In this case we obtain the solution of Abel equation \eqref{ecuaAbelalg} in the following form
\begin{equation}
y= \frac{e^{\int{\left( f_{1}-\frac{f_{2}^{2}}{3f_{3}} \right) }dx}}{\sqrt{C-\int{f_{3}e^{2{\int{\left( f_{1}-\frac{f_{2}^{2}}{3f_{3}} \right)}dx}}}dx}} - \frac{f_{2}}{3f_{3}}\label{solSal}
\end{equation}

{\bf Case 2}. \; $\varphi(\xi) = C_{0}$. where $C_{0}$ is a constant.

\begin{equation}
y=e^{\int{\left( f_{1}-\frac{f_{2}^{2}}{3f_{3}} \right) }dx} \frac{e^{C_{0}\int{f_{3}e^{2\int {\left( f_{1}-\frac{f_{2}^{2}}{3f_{3}} \right)dx }}dx}}}{\sqrt{C-\int e^{2C_{0}\int{f_{3}e^{2\int{\left( f_{1}-\frac{f_{2}^{2}}{3f_{3}} \right)}dx}}}}dx} - \frac{f_{2}}{3f_{3}}\label{solSal2}
\end{equation}

To obtain Liouvillian solutions, we consider the phenomenon of whashing, when the variable $X_{1}\longrightarrow 0$. In this case $X_{1}(0)=0$ and we reduce solution \eqref{solSal} in the form 

\begin{equation}
\hat{V}= \frac{e^{F_{1}(w)}}{\sqrt{C-F_{2}(w)}} - \frac{f_{2}}{3f_{3}}\notag
\end{equation}
with
\begin{equation}
\begin{aligned}
F_{1}(w)&= \int{\left(  f_{1}-\frac{f_{2}^{2}}{3f_{3}} \right) dw} = \ln{\left| C_{1}w^{b_{1}} \right|}; \;\;\; F_{2}=2F_{1}(w)\notag\\
F_{3}(w)&= \int{f_{3}e^{2F_{1}(w)}}dw = \frac{a_{3}C_{1}^{2}}{2b_{1}m} w^{2b_{1}} + C_{2}\notag
\end{aligned}
\end{equation}
and
$$
b_{1}=\frac{D}{m}-\frac{a_{5}^{2}}{3ma_{3}}, \;\;\; b_{2}=\frac{C_{0}a_{3}C_{1}^{2}}{b_{1}m} 
$$
with constants, as defined in \eqref{const}. Applying the change of variable $\hat{V}[e^{mt}]= V(t)$, the Liouvillian solution to Abel equation Case 1, is
$$
V(t)=\frac{C_{1}e^{b_{1}mt}}{\sqrt{C-2(\ln{C_{1}}+b_{1}mt)}} - \frac{a_{5}}{3a_{3}}
$$
Taking up again the change of variable \eqref{varialV}, the solutions to \eqref{upS1} is
$$
S_{1}^{0} = \frac{3a_{3}\sqrt{C-2(\ln{C_{1}}+b_{1}mt)}}{3a_{3}C_{1}e^{b_{1}mt}-a_{5}\sqrt{C-2(\ln{C_{1}}+b_{1}mt)}} - K_{S-1}
$$

\begin{figure}[!hbt]						
\centering
\includegraphics[scale=0.33]{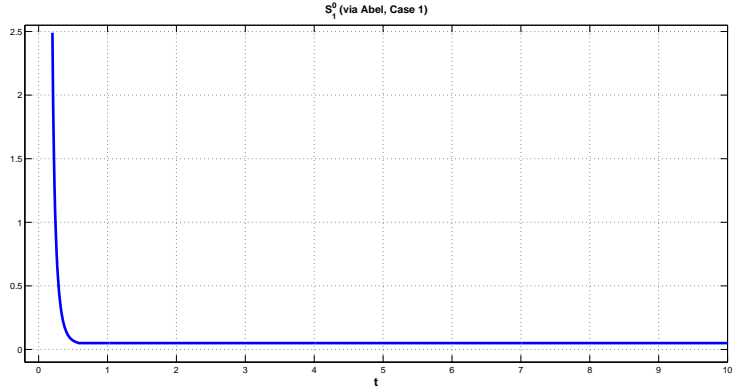}
\caption{$S_1^0$ via Abel, Case 1. }
\label{figura3}
\end{figure}
Figure \ref{figura3} shows the graph of Liouvillian solution for the variable $S_{1}$ associated with the Abel's equation to Case 1.
\smallskip

Now reduce solution \eqref{solSal2} in the form

\begin{equation}
\hat{V}=e^{F_{1}(w)} \frac{e^{c_{0}} F_{3}(w)}{\sqrt{C-F_{4}(w)}} - \frac{f_{2}}{3f_{3}}\notag
\end{equation}
with
$$
F_{4}(w)=\int{e^{2C_{0}F_{3}(w)}dw} = -\frac{e^{2C_{0}C_{2}}}{2b_{1}(-b_{2})^{1/2b_{1}}} \; \Gamma{\left( \frac{1}{2b_{1}},-b_{2}w^{2b_{1}} \right)} + C_{3}
$$

Using the change of variable above, the solution Abel equation to Case 2 is
$$
V(t)=C_{1}e^{b_{1}mt}\; \frac{ \frac{C_{0}a_{3}C_{1}^{2}}{2b_{1}m} e^{2b_{1}mt}+C_{2}}{\sqrt{C-C_{3} + \frac{e^{2C_{0}C_{2}}}{2b_{1}(-b_{2})^{1/2b_{1}}} \Gamma{\left( \frac{1}{2b_{1}},-b_{2}e^{2mb_{1}} \right)} }} - \frac{a_{5}}{3a_{3}}.
$$

From \eqref{varialV}, the solution to $S_{1}^{0}$ is

$$
S_{1}^{0} = \frac{3a_{3} \sqrt{C-C_{3} + \frac{e^{2C_{0}C_{2}}}{2b_{1}(-b_{2})^{1/2b_{1}}}\;\Gamma{\left( \frac{1}{2b_{1}}, -b_{2}e^{2b_{1}mt} \right)} } }{3a_{3}C_{1}e^{b_{1}mt}\left( \frac{C_{0}a_{3}C_{1}^{2}}{2b_{1}m} e^{2b_{1}mt} + C_{2}\right) -a_{5}\sqrt{C-C_{3} + \frac{e^{2C_{0}C_{2}}}{2b_{1}(-b_{2})^{1/2b_{1}}}\;\Gamma{\left( \frac{1}{2b_{1}}, -b_{2}e^{2b_{1}mt} \right)} } }.
$$

\section*{Final Remarks}

In this paper we found a direct connection between the variable that represents the substrate in the Acidogenesis process and Abel's equation of first kind. In the search for Liouvillian solutions to this equation, we implemented a combination between Hamiltonian Algebrization with a method to obtain Liouvillian solutions of Abel's equations given in \cite{Salinas}. We found Liouvillian solutions for case 1 and case 2. The graphics show that the solution of  differential equation for $S_{1}$ has the same behavior as the graphics of the numerical approximations, which tells us that these functions represent the process found.
\bigskip

{\bf Acknowledgements.} The first author thanks to Universidad de Ibagu\'e for the support to finish this work through a research stay there. The second author acknowledges to Universidad de Ibagu\'e for supporting his research through project ref: 15-344-INT.
\bigskip

\end{document}